\newcommand{\R}{\mathbb{R}}
\newcommand{\C}{\mathbb{C}}
\newtheorem{Theorem}{Theorem}
\title{A quasi separable 
dissipative Maxwell-Bloch  system for laser dynamics}
\author{Gianluca Gorni\\
Universit\`a di Udine\\
Dipartimento di Matematica e Informatica\\
via delle Scienze~208, 33100 Udine, Italy\\
\tt{gianluca.gorni@uniud.it}
\and
Stefania Residori\\
Institut de Physique de Nice, UMR7010 \\ 
Universit\'e de Nice - Sophia Antipolis, CNRS\\
1361 route des Lucioles, 06560 Valbonne, France\\
\tt{stefania.residori@inphyni.cnrs.fr}
\and
Gaetano  Zampieri\\
Universit\`a di Verona\\
Dipartimento di Informatica\\
strada Le Grazie 15, 37134 Verona, Italy\\
\tt{gaetano.zampieri@univr.it}}
\date{12 August 2017}
\begin{document}

\maketitle
\begin{abstract}
The Maxwell-Bloch dissipative equations describe laser dynamics. Under a simple condition on the parameters there exist two time dependent first integrals, that  allow a nonstandard separation of variables in the equations. That condition has a precise physical meaning. The separated differential equations lead naturally to simple conjectures on the asymptotic behavior of the physical variables.
\end{abstract}

Subject classes: 78A60, 37J15.

\section*{keywords}
Dissipative Maxwell-Bloch; nonstandard separation of variables.

\section{Introduction}
\label{Introduction}

The Maxwell-Bloch equations are well-know to describe laser dynamics for a system of two-level atoms in a cavity resonator. They were first derived in a 1965 paper by Arecchi et Bonifacio \cite{Arecchi_1965}. They have come to be credited as Maxwell-Bloch equations because of coupling the Maxwell equations for the envelope of the electric field with the Bloch description of atomic electric dipoles in interactions with electromagnetic fields (see, e.g., \cite{Nature_2015}). As a matter of fact, the Maxwell-Bloch equations are widely employed as a prototype model of light-matter interaction \cite{Newell}. While their derivation uses a quantum (semi-classical) approach to polarization of a two-level atom and density matrix, the resulting equations are classical and describe the macroscopic (classical) electric field amplitude, polarization, and population inversion.

Following Arecchi et al.~\cite{ArecchiMeucci}, \cite{Arecchi}, \cite{ArecchiTredicce}, the Maxwell-Bloch equations can be written as
\begin{equation}\label{MaxwellBlochArecchiMeucci}
  \begin{split}
  \dot E={}&-(\kappa+i\zeta)E+gP,\\
  \dot P={}&-(\gamma_\perp+i\delta)P+gE\Delta,\\
  \dot\Delta={}&-\gamma_\parallel(\Delta-\Delta_0)-
  2g(E^*P+EP^*)
  \end{split}
\end{equation}
where
\begin{itemize}
\item $E$ is the complex amplitude of the electric field,
\item $P$~the complex polarization of the atomic medium,
\item $\Delta$ is the real population inversion,
\item $\kappa$, $\gamma_\perp$, $\gamma_\parallel>0$ are the loss rates of $E,P,\Delta$ respectively,
\item $g>0$ is the coupling constant,
\item $\Delta_0$ is the real equilibrium population inversion in the absence of the field,
\item $\zeta$ is the cavity mistuning, 
\item $\delta$ is the detuning between the field frequency and the center of the atomic line.
\end{itemize}
The star in $E^*,P^*$ indicates the complex conjugate. For a more detailed discussion on the physical meaning of the variables and parameters we refer to the original articles. 

Laser are usually classified into classes according to the relative relaxation times for the electric field, polarization and population inversion and consequent adiabatic elimination of the fast variables~\cite{ArecchiMeucci}. In some situations Maxwell-Bloch equations exhibit chaotic behaviors \cite{Haken_1975}, \cite{Arecchi}, \cite{ArecchiTredicce}, when at least three variables are involved in the dynamics. For comparable relaxation times all the three equations must be kept and it can be shown that Maxwell-Bloch equations are  equivalent to the Lorenz equations \cite{Haken_1975}. When adiabatic elimination for the polarization and/or the population reduces the number of equations, laser chaotic behaviors can be produced by adding another variable, for instance by modulating losses \cite{Arecchi} or by injecting an external signal in the cavity \cite{ArecchiTredicce}.

In a previous paper of ours~\cite{GZkilling} we introduced a nonvariational Lagrangian setting for Maxwell-Bloch equations in the dissipative limit, that is, when both the cavity mistuning and the detuning of the electric field frequency from the center of the atomic line can be set to zero:
\begin{equation}
  \zeta=0,\qquad \delta=0.
\end{equation}
In this limit we derived an interesting nonlocal constant of motion, which becomes a true time-dependent first integral for the particular case 
 when the loss rate of the population inversion is  two times the loss rate of the photons in the cavity:
\begin{equation}\label{separableCase}
 \gamma_\parallel=2\kappa.
\end{equation}
In Section~\ref{DissipativeMBsection} below we will provide a simplified, self-contained derivation of the first integral.

The main result of this work is to show in Section~\ref{separation} that this first integral, together with another one which is obvious, leads to a certain kind of separation of variables: if we write $E$ in polar form $E=re^{i\theta}/2$ (the $/2$ simplifies equations \eqref{MB-5-D} below consistently with previous papers), we can write a second-order nonautonomous differential equation~\eqref{radialeq} for $r(t)$
\begin{equation}\label{separataRPhysical}
  \ddot r=-(\kappa+\gamma_\perp)\dot r+
  \Bigl(g^2\Delta_0-\kappa\gamma_\perp+
  \frac{g^2M}{2} e^{-2\kappa t}\Bigr)r
  -\frac{g^2}{2}r^3+
  \frac{N^2}{r^3}e^{-2(\kappa+\gamma_\perp)t},
\end{equation}
containing the first integral values $N,M$ but not $\theta,P,\Delta$, which can be easily derived from~$r$ by quadrature.

The final form of equation~\eqref{separataRPhysical} and some numerical experiment lead us to conjecture that the system has a very simple asymptotic behavior:
\begin{itemize}
\item if $g^2\Delta_0\le \kappa \gamma_\perp$ then $E,P,\Delta$ all vanish as $t\to+\infty$; Figure~\ref{zeroAsymptotics} shows such a trajectory of~$E$ and~$P$ in the complex plane, computed numerically.

\item If $g^2\Delta_0> \kappa \gamma_\perp$ then $E$ converges to a constant $E_\infty\in \C$ such that $\lvert E_\infty \rvert^2 =( g^2\Delta_0- \kappa \gamma_\perp)/(2g^2)$, $P(t)\to \kappa E_\infty/g$, and $\Delta\to\kappa \gamma_\perp$, as $t\to+\infty$; Figure~\ref{asymptoticCircle1} is a sample of this situation.
\end{itemize}

\begin{figure}\centering
\includegraphics[width=\textwidth]{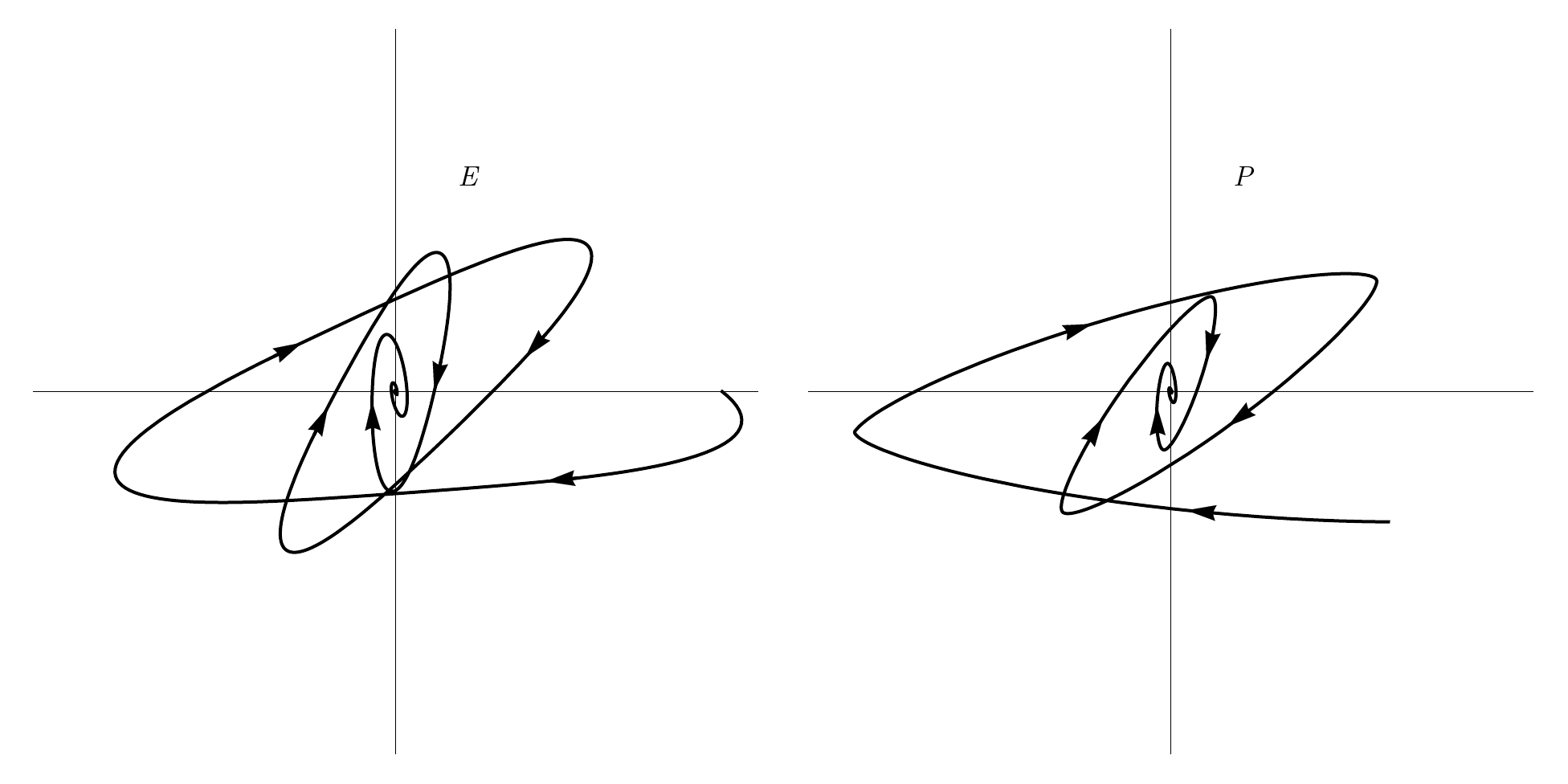}
\caption{A forward orbit of $E$ and $P$ in the complex plane in a case $g^2\Delta_0\le \kappa \gamma_\perp$: they converge to~0.}
\label{zeroAsymptotics}

\bigskip

\includegraphics[width=\textwidth]{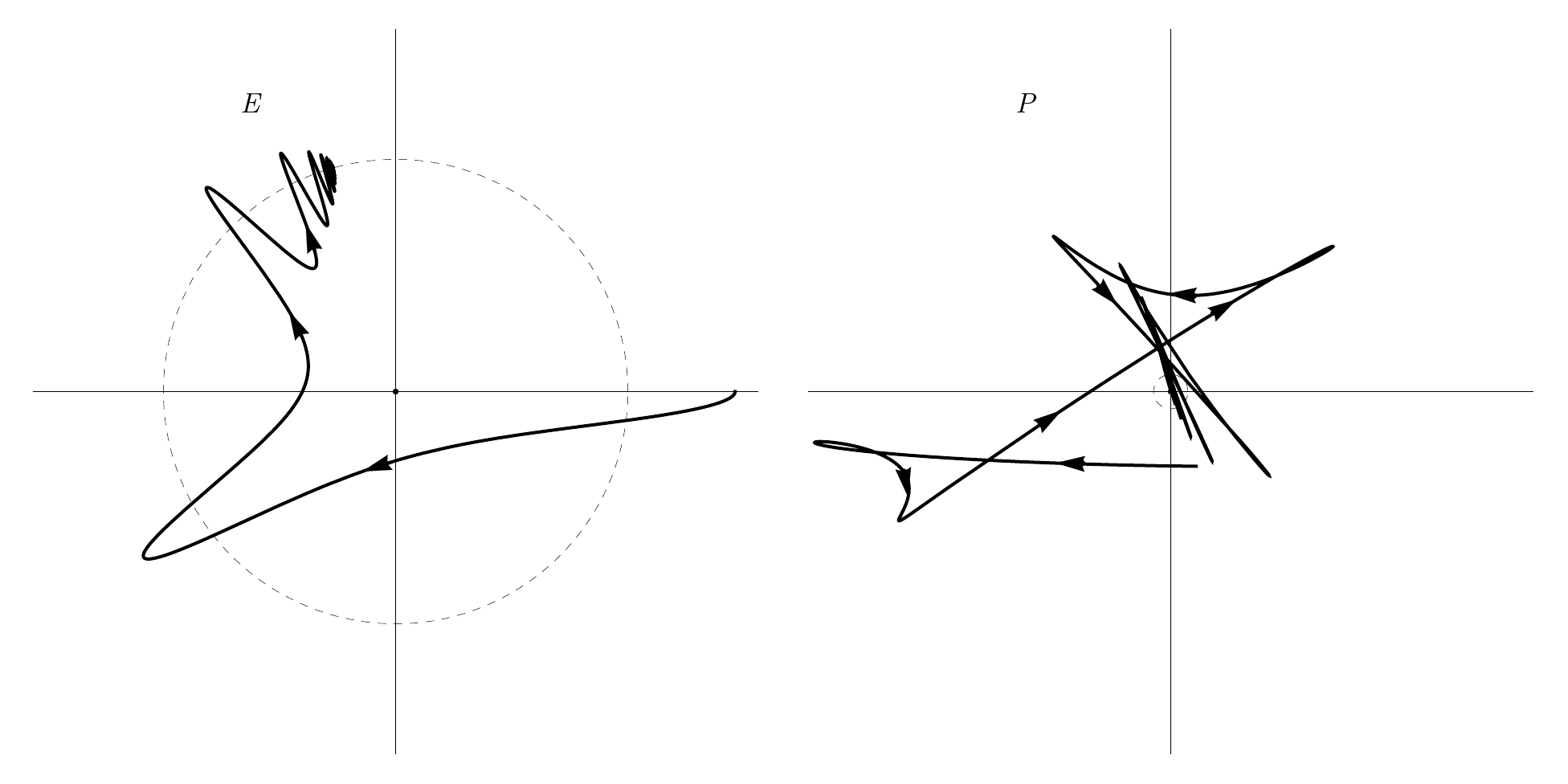}
\caption{A forward orbit of $E$ and of~$P=\kappa E+\dot E$ in the complex plane in a case $g^2\Delta_0> \kappa \gamma_\perp$: $E$~oscillates around, and converges to, a point of an asymptotic circle of radius $\lvert E_\infty \rvert=\sqrt{g^2 \Delta_0- \kappa \gamma_\perp}/(\sqrt{2}g)$; $P$~converges to a point on the (small in the picture) circle of radius $\kappa \lvert E_\infty \rvert/g$.} \label{asymptoticCircle1}
\end{figure}


\section{Lagrangian setting for the dissipative Maxwell-Bloch equations}\label{DissipativeMBsection}

In equations~\eqref{MaxwellBlochArecchiMeucci} we introduce the new real variables $x_1,x_2, y_1,y_2,z$ as
\begin{equation}
  x_1+ix_2=2E,\quad
  y_1+iy_2=2P,\quad
  z=\Delta,
\end{equation}
and restrict us to the dissipative limit $\theta=0$, $\delta=0$. To be consistent with our previous work~\cite{GZkilling}, we rename the coefficients this way
\begin{equation}
  a=\kappa,\quad b=\gamma_\perp,\quad c=\gamma_\parallel,
  \quad k=\Delta_0.
\end{equation}
We obtain what we will call \emph{the dissipative Maxwell-Bloch equations}, which are the following 5-dimensional system with $a,b,c,g>0$, and $k\in\mathbb{R}$ parameters
\begin{equation}\label{MB-5-D}
  \begin{cases}
  \dot x_1=-ax_1+gy_1,\\
  \dot x_2=-ax_2+gy_2,\\
  \dot y_1=-by_1+gx_1 z,\\
  \dot y_2=-by_2+gx_2z,\\
  \dot z=-c\bigl(z-k\bigr)-g(x_1y_1+x_2y_2).
  \end{cases}
\end{equation}

To rewrite~\eqref{MB-5-D} in a nonvariational Lagrangian setting, we choose the Lagrangian variables as
\begin{equation}
  q_1=x_1,\quad q_2=x_2,\quad \dot q_3=z.
\end{equation}
This choice gives $\dot q_1=\dot x_1=-ax_1+gy_1$ and $\dot q_2=\dot x_2=-ax_2+gy_2$, so
\begin{equation}\label{velocity}
  y_1=\frac{\dot q_1+a q_1}{g},
  \qquad
  y_2=\frac{\dot q_2+a q_2}{g}.
\end{equation}
The equations~\eqref{MB-5-D} become
\begin{equation}\label{ELDissipativeConcrete}
  \begin{cases}
  \ddot q_1=-abq_1-(a+b)\dot q_1+g^2q_1\dot q_3\\
  \ddot q_2=-abq_2-(a+b)\dot q_2+g^2q_2\dot q_3\\
  \ddot q_3=-a\bigl(q_1^2+q_2^2\bigr)
  -c(\dot q_3-k)
  -\bigl(q_1\dot q_1+q_2\dot q_2\bigr)
  \end{cases}
\end{equation} 
We set the vector variable $q=(q_1,q_2,q_3)$ and introduce the Lagrangian formulation as
\begin{gather}\label{Ldissipative}
  L(t,q,\dot q)=\frac{1}{2}
  \bigl(\dot q_1^2+\dot q_2^2+g^2\dot q_3^2+(q_1^2+q_2^2)
  (g^2\dot q_3-ab\bigr)\bigr), \\
  \label{Qdissipative}
  Q(t,q,\dot q)=\Bigl(-(a+b)\dot q_1,
  -(a+b)\dot q_2,
  -ag^2\bigl(q_1^2+q_2^2\bigr)-cg^2(\dot q_3-k)\Bigr).
\end{gather}
The equations~\eqref{ELDissipativeConcrete} become the nonvariational Lagrange equation
\begin{equation}\label{nonVariationalLagrange}
  \frac{d}{dt}\partial_{\dot q}L\bigl(t,q(t),\dot q(t)\bigr)
  -\partial_{q}L\bigl(t,q(t),\dot q(t)\bigr)=
  Q\bigl(t,q(t),\dot q(t)\bigr),
\end{equation}
as can be checked easily. For systems described by this kind of equation we have developed a theory of nonlocal constants of motion~\cite{GZkilling}. 

The only result we need here is the following:

\begin{Theorem}\label{theoremOfNonvariationalConstantOfMotion}
Let  $L(t,q,\dot q)$, $q,\dot q\in\R^n$, be a smooth Lagrangian function, let $t\mapsto q(t)$ be a solution to the (nonvariational) Lagrange equation~\eqref{nonVariationalLagrange} and let $q_\lambda(t)$ be a  family of  perturbed motions, smooth in  $(\lambda,t)$, with  $\lambda$ in a neighbourhood of $0\in\R$, and such that $q_\lambda(t)\equiv q(t)$ when $\lambda=0$.
Then the following function is constant:
\begin{multline}\label{ConstantAlongNonvariationalMotion}
  t\mapsto
  \partial_{\dot q}
  L\bigl(t,q(t),\dot q(t)\bigr)\cdot
  \partial_\lambda q_\lambda(t)
  \big|_{\lambda=0}-
  \int_{t_0}^t\biggl(
  \frac{\partial}{\partial\lambda}
  L\bigl(s,q_\lambda(s),\dot q_\lambda(s)\bigr)
  \big|_{\lambda=0}+\\
  +Q\bigl(s,q(s),\dot q(s)\bigr)\cdot
  \partial_\lambda q_\lambda(s)
  \big|_{\lambda=0}\biggr)ds\,.
\end{multline}
\end{Theorem}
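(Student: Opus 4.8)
The plan is to show that the time derivative of the function in~\eqref{ConstantAlongNonvariationalMotion} vanishes identically along the given solution $q(t)$. Write $F(t)$ for that function, split as $F(t)=A(t)-\int_{t_0}^t\bigl(B(s)+C(s)\bigr)\,ds$, where $A(t)=\partial_{\dot q}L(t,q(t),\dot q(t))\cdot\partial_\lambda q_\lambda(t)|_{\lambda=0}$, and $B(s),C(s)$ are the two terms inside the integral. By the fundamental theorem of calculus, $\frac{d}{dt}\int_{t_0}^t(B+C)\,ds = B(t)+C(t)$, so it suffices to prove $A'(t)=B(t)+C(t)$ for all $t$.

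First I would differentiate $A(t)$ by the product rule:
\begin{equation*}
  A'(t)=\frac{d}{dt}\Bigl(\partial_{\dot q}L\bigl(t,q(t),\dot q(t)\bigr)\Bigr)\cdot\partial_\lambda q_\lambda(t)\big|_{\lambda=0}
  +\partial_{\dot q}L\bigl(t,q(t),\dot q(t)\bigr)\cdot\frac{d}{dt}\partial_\lambda q_\lambda(t)\big|_{\lambda=0}.
\end{equation*}
In the first summand I substitute the Lagrange equation~\eqref{nonVariationalLagrange}, replacing $\frac{d}{dt}\partial_{\dot q}L$ by $\partial_q L(t,q(t),\dot q(t))+Q(t,q(t),\dot q(t))$; the $Q$-piece then matches $C(t)$ exactly. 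In the second summand I use the smoothness of the family in $(\lambda,t)$ to exchange the order of differentiation, writing $\frac{d}{dt}\partial_\lambda q_\lambda(t)|_{\lambda=0}=\partial_\lambda \dot q_\lambda(t)|_{\lambda=0}$.

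What remains is to recognize that the two surviving terms,
\begin{equation*}
  \partial_q L\bigl(t,q(t),\dot q(t)\bigr)\cdot\partial_\lambda q_\lambda(t)\big|_{\lambda=0}
  +\partial_{\dot q}L\bigl(t,q(t),\dot q(t)\bigr)\cdot\partial_\lambda \dot q_\lambda(t)\big|_{\lambda=0},
\end{equation*}
are precisely the chain-rule expansion of $\partial_\lambda\bigl(L(t,q_\lambda(t),\dot q_\lambda(t))\bigr)\big|_{\lambda=0}$, which is the integrand term $B(t)$. (Here one uses that $L$ has no explicit dependence that is being differentiated — only the slots $q$ and $\dot q$ carry $\lambda$-dependence through $q_\lambda$.) Collecting, $A'(t)=B(t)+C(t)$, hence $F'(t)\equiv 0$ and $F$ is constant. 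I expect no serious obstacle: the only point requiring care is the interchange $\frac{d}{dt}\partial_\lambda=\partial_\lambda\frac{d}{dt}$ on $q_\lambda(t)$, which is justified by the assumed joint smoothness in $(\lambda,t)$ (Schwarz's theorem); everything else is the product rule, the fundamental theorem of calculus, and substitution of the equation of motion.
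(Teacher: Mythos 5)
Your proof is correct and is exactly the argument the paper intends: the paper's proof is the one-line instruction to differentiate in $t$, reverse the order of the $t$- and $\lambda$-derivatives, and substitute the Lagrange equation, which is precisely what you carry out in detail (including the use of $q_\lambda(t)|_{\lambda=0}=q(t)$ so that the chain-rule terms reassemble into $B(t)$). No differences in approach; your write-up just makes the paper's sketch explicit.
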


\begin{proof}
Simply take the time derivative of~\eqref{ConstantAlongNonvariationalMotion}, reverse the derivation order and use the Lagrange equation~\eqref{nonVariationalLagrange}.
\end{proof}

We introduced the variational case ($Q\equiv 0$) of the theorem in our 2014 paper~\cite[Th.~3]{GorniZampieri}, and provided more applications later~\cite{GZnonlocal}. In particular, we devoted a paper~\cite{GZMB} to the Maxell-Bloch equation in the conservative case ($a=b=c=0$), giving a separation of variables on fixed levels of the energy and other first integrals, in a way related to the theory in Pucacco and Rosquist~\cite{Pucacco}.

We have already found two choices of perturbed motion $q_\lambda(t)$ that lead to interesting nonlocal constants of motion.

Consider first the rotation
\begin{equation}
  q_\lambda(t)=
  \begin{pmatrix}\cos\lambda&
  \sin\lambda&0\\
  -\sin\lambda&\cos\lambda&0\\
  0&0&1
  \end{pmatrix}
  \begin{pmatrix}q_1(t)\\q_2(t)\\q_3(t)\end{pmatrix},\quad
   \partial_\lambda q_\lambda(t)
  \big|_{\lambda=0}=\bigl(-q_2(t), q_1(t),0\bigr),
\end{equation}
and plug this into formula~\eqref{ConstantAlongNonvariationalMotion} of Theorem~\ref{theoremOfNonvariationalConstantOfMotion}. We obtain this constant of motion:
\begin{equation}
  q_1(t)\dot q_2(t)-q_2(t)\dot q_1(t)
  +(a+b)\int_{t_0}^t
  \bigl(q_1(s)\dot q_2(s)-q_2(s)\dot q_1(s)\bigr)ds,
\end{equation}
that is, $J+(a+b)\int J\,ds$ where $J=q_1\dot q_2-q_2\dot q_1$ is the angular momentum. So $\dot J+(a+b)J=0$ and
we have the following time-dependent first integral
\begin{equation}\label{firstintegralangmom} 
  N=e^{(a+b)t} \bigl(q_1\dot q_2-q_2\dot q_1\bigr).
\end{equation}

The next interesting choice is $ q_\lambda(t)= q(t)+ \lambda(0,0, 2e^{ct})$, for which the constant of motion is
\begin{multline}
  g^2e^{ct}\bigl(q_1(t)^2+q_2(t)^2+2\dot q_3(t)\bigr)-
  g^2\int e^{ct}\Bigl(2ck +
  (c- 2a)\bigl(q_1(t)^2 + q_2(t)^2\bigr)
  \Bigr)dt,
\end{multline}
which, after a simple integration by part, becomes
\begin{multline}
  g^2e^{ct}\bigl(q_1(t)^2+q_2(t)^2+2\dot q_3(t)-2k\bigr)+
  (2a-c)g^2\int e^{ct}\bigl(q_1(t)^2+q_2(t)^2\bigr)dt.
\end{multline}
We deduce that the quantity
\begin{equation}\label{secondMonotonicForMaxwellBloch}
  t\mapsto
  e^{ct}\bigl(q_1(t)^2+q_2(t)^2+2\dot q_3(t)-2k\bigr) 
\end{equation}
is monotonic, and it is a true first integral when $c=2a$:
\begin{equation}\label{firstintc=2a}
  M=e^{2at}\bigl(q_1(t)^2+q_2(t)^2+2\dot q_3(t)-2k
  \bigr).
\end{equation}
In the sequel we will assume the case $c=2a$, which translates as~\eqref{separableCase} in the original physical notation.


\section{Quasi separation}\label{separation}

As first step we solve the conservation law \eqref{firstintc=2a} for $\dot q_3$ and we plug the result into the first two equations of~\eqref{ELDissipativeConcrete}, obtaining nonautonomous equations for $q_1,q_2$ that do not depend on~$q_3$:
\begin{equation}\label{eqsq1q2casoc=2a}
  \begin{split}
  \ddot q_1&=-abq_1-(a+b)\dot q_1+g^2
  \Bigl(k+\frac{1}{2} M e^{-2at}-\frac{1}{2}\bigl( q_1^2+q_2^2\bigr)
  \Bigr)q_1\\
  \ddot q_2&=-abq_2-(a+b)\dot q_2+g^2
  \Bigl(k+\frac{1}{2} M e^{-2at}-\frac{1}{2}\bigl( q_1^2+q_2^2\bigr)
  \Bigr)q_2.
  \end{split}
\end{equation} 
The point $(q_1,q_2,\dot q_1,\dot q_2)=0$ is an equilibrium position for all values of the parameter $M$. The corresponding solutions of the full initial system are deduced by \eqref{firstintc=2a}: $\dot q_3(t)=k+M e^{-2at}/2$. 

If we introduce polar coordinates $(r,\theta)$ the $(q_1,q_2)$ plane, we will obtain a nonautonomous equation for~$r$ that does not contain~$\theta$.

If we multiply equations~\eqref{eqsq1q2casoc=2a}, multiplied respectively by $\cos\theta$ and $\sin\theta$ and summed term by term, 
and using the formula for the radial acceleration
\begin{equation}
 \ddot q_1 \cos\theta+ \ddot q_2\sin\theta=\ddot r-r\dot \theta^2,
\end{equation}
we obtain
\begin{align*}
  \ddot r-r\dot \theta^2={}&
  -abr\cos^2\theta-(a+b)
  (\dot r\cos\theta-r\dot\theta\sin\theta)\cos\theta+{}\\
  &{}+g^2\Bigl(k+\frac{1}{2} M e^{-2at}
  -\frac{1}{2}r^2\Bigr)r\cos^2\theta
  -abr\sin^2\theta+{}\\
  &{}+
  g^2\Bigl(k+\frac{1}{2} M e^{-2at}-
  \frac{1}{2}r^2\Bigr)r\sin^2\theta-{}\\
  &-(a+b)
  (\dot   r\sin\theta+r\dot\theta\cos\theta)\sin\theta=\\
  ={}&-abr-(a+b)\dot r
  +g^2\Bigl(k+\frac{1}{2} M e^{-2at}-\frac{1}{2}r^2\Bigr)r.
\end{align*}
We can eliminate $\dot\theta$ using the first integral \begin{equation}\label{arealintegral}
  r^2\dot\theta=N e^{-(a+b)t}.
\end{equation}
Rearranging we arrive at our separated nonautonomous equation for the radial coordinate~$r$:
\begin{equation}\label{separataR}
  \ddot r=-(a+b)\dot r+\Bigl(g^2k-ab+\frac{g^2M}{2} e^{-2at}\Bigr)r
  -\frac{g^2}{2}r^3+\frac{N^2}{r^3}e^{-2(a+b)t}.
\end{equation}
It can be put into the variational Lagrangian framework
\begin{equation}\label{radialeq}
 \frac{d}{dt}\partial_{\dot r}{\cal L}\bigl(t,r(t),\dot r(t)\bigr)
  -\partial_{r}{\cal L}\bigl(t,r(t),\dot r(t)\bigr)=0
\end{equation}
with
\begin{equation}\label{radialLagrangian}
  {\cal L}=e^{(a+b)t}
  \biggl(\frac{1}{2} \dot r^2+\Bigl(g^2k-ab+
  \frac{g^2M}{2}
  e^{-2at}\Bigr)\frac{r^2}{2}-\frac{g^2}{8}r^4-
  \frac{N^2}{2r^2}e^{-2(a+b)t}\biggr).
\end{equation}

\section{Conjectures on the asymptotic behavior}\label{asymptoticBehavior}

To get some feeling of the asymptotic behavior of~$r(t)$ as $t\to+\infty$, let us see what happens if the time exponentials $e^{-2at}$ and $e^{-2(a+b)t}$ in equation~\eqref{separataR} are replaced by their limit~0:
\begin{equation}\label{separataRasintotica}
  \ddot r=-(a+b)\dot r+(g^2k-ab)r
  -\frac{g^2}{2}r^3.
\end{equation}
This limiting equation has constant solutions corresponding to the nonnegative solutions of the algebraic equation
\begin{equation}\label{equilibriumRadiusEquation}
  (g^2k-ab)r-\frac{g^2}{2}r^3=0.
\end{equation}
There are clearly two cases:

\begin{itemize}

\item If $g^2k\le ab$ then equation~\eqref{equilibriumRadiusEquation} has only the solution $r=0$, and all solutions of the simplified equation~\eqref{separataRasintotica} converge to $r=0$ as $t\to+\infty$. We suspect that the same happens for the original equation~\eqref{separataR}, whose last term $+N^2e^{-2(a+b)t}/r^3$ will repel the solutions from the origin and avoid that the singularity $r=0$ be reached in finite time. Figure~\ref{zeroAsymptotics} shows a typical trajectory on the $E=(q_1+iq_2)/2$ plane.

\item If $g^2k>ab$ we have two nonnegative solutions $r=0$ and
\begin{equation}\label{asymtpticRadius}
  r_\infty=\frac{1}{g}\sqrt{2(g^2k-ab)}.
\end{equation}
The positive solutions of the simplified equation~\eqref{separataRasintotica} all converge to~$r_\infty$ as $t\to+\infty$. We conjecture that also the solutions to the original equation~\eqref{separataR} converge to the same limit. Equation~\eqref{arealintegral} will mean an exponential decay for $\dot\theta$, and therefore a finite limit for $\theta(t)$ as~$t\to+\infty$. The conjecture is sustained by numeric simulations. Figure~\ref{asymptoticCircle1} shows a typical trajectory on the $E=(q_1+iq_2)/2$ plane, with the asymptotic circle shown dashed. Figure~\ref{asymptoticCircle2} is a variant, where we have chosen a starting point on the asymptotic circle with null radial speed.

\end{itemize}

In both cases we conjecture that  $E(t)=\bigl(q_1(t)+iq_2(t)\bigr)/2$ converges to finite limit~$E_\infty$ as $t\to +\infty$ while $\dot E(t)$ vanishes in the limit $t\to +\infty$. Also this conjecture is sustained by numeric simulations. If this is true, then $P=(aE+\dot E)/g$ converges to $aE_\infty/g$.

\begin{figure}\centering
\includegraphics[width=\textwidth]{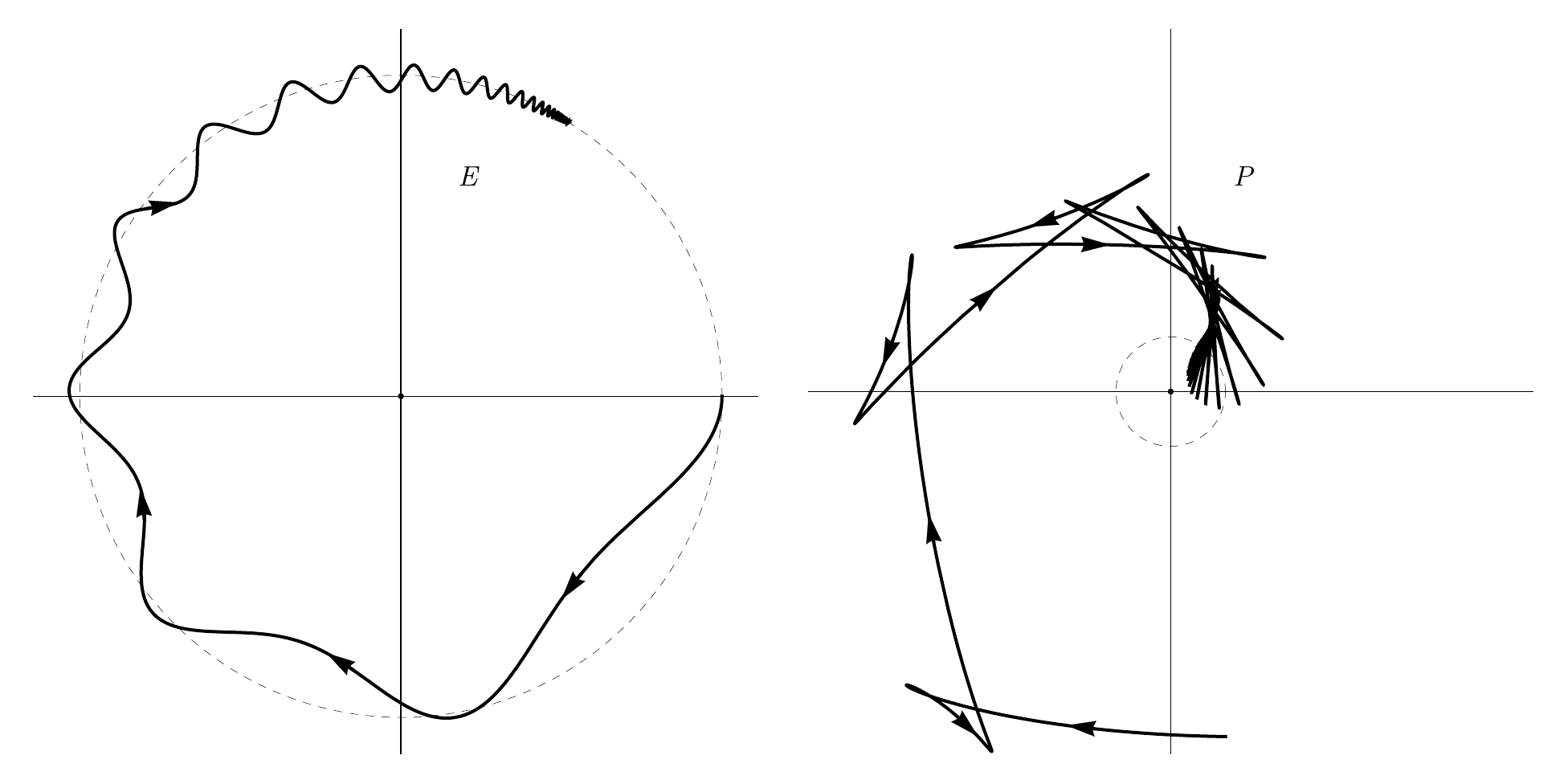}
\caption{A forward orbit of $E$ and $P$ in the complex plane in a case when $g^2\Delta_0> \kappa \gamma_\perp$, with $E$ starting at a point of the asymptotic circle  with null radial speed.}
\label{asymptoticCircle2}
\end{figure}

\section*{Acknowledgment}
The research was done under the auspices of INdAM (Istituto Nazionale di Alta Matema\-tica). 


\end{document}